\crefname{lemma}{lemma}{lemmas}
\Crefname{lemma}{Lemma}{Lemmas}
\newcommand{\ignore}[1]{}
\newif\ifamazoninternal
\newcommand{\amazon}[2]{\ifamazoninternal #1\else  #2\fi}
\begin{document}

\title{Neural Coordination and Capacity Control for Inventory Management}

\author{
Carson Eisenach\thanks{SCOT, Amazon}
	\and 
Udaya Ghai\footnotemark[1]
	\and 
Dhruv Madeka\thanks{Google. Work done while at Amazon.}
	\and
Kari Torkkola\footnotemark[1]
	\and
Dean Foster\footnotemark[1]
	\and
Sham Kakade\footnotemark[1]~\thanks{Harvard University, Cambridge, MA.}
}

\maketitle

\begin{abstract}
	This paper addresses the capacitated periodic review inventory control problem, focusing on a retailer managing multiple products with limited shared resources, such as storage or inbound labor at a facility. Specifically, this paper is motivated by the questions of (1) what does it mean to backtest a capacity control mechanism, (2) can we devise and backtest a capacity control mechanism that is compatible with recent advances in deep reinforcement learning for inventory management? First, because we only have a single historic sample path of Amazon's capacity limits, we propose a method that samples from a distribution of possible constraint paths covering a space of real-world scenarios. This novel approach allows for more robust and realistic testing of inventory management strategies. Second, we extend the exo-IDP (Exogenous Decision Process) formulation of \citet{madeka2022deep} to capacitated periodic review inventory control problems and show that certain capacitated control problems are no harder than supervised learning. Third, we introduce a `neural coordinator', designed to produce forecasts of capacity prices, guiding the system to adhere to target constraints in place of a traditional model predictive controller. Finally, we apply a modified DirectBackprop algorithm for learning a deep RL buying policy and a training the neural coordinator. Our methodology is evaluated through large-scale backtests, demonstrating RL buying policies with a neural coordinator outperforms classic baselines both in terms of cumulative discounted reward and capacity adherence (we see improvements of up to 50\% in some cases).
\end{abstract}


\section{Introduction}
\label{sec:intro}

In modern inventory control systems, managing capacity resources that are shared across hundreds or thousands of products is a key problem. We are interested in the setting where a large retailer manages a supply chain for multiple products and has limited resources (such as storage) that are shared amongst all the products that retailer stocks. This is known as the {\it capacitated} periodic review inventory control problem, and in this work we consider the setting where a retailer with a single facility stocks multiple products and seeks to maximize revenue subject to volumetric (storage) and/or flow constraints (i.e. into or out of a warehouse) at that facility.\footnote{We consider the single facility problem for reasons of clarity, but our methodology could be extended to a multi-facility network\amazon{ (see \citet{jia2023drim})}{}.}

Many variants of the inventory control problem have been studied extensively in the operations research (OR) literature \citep{scarf1959opt,porteus2002foundations}. More recently, several works have applied deep reinforcement learning to the unconstrained problem \citep{madeka2022deep,andaz2023learning\amazon{,andaz2022rlqot}{}} and have shown how to use historic data from an actual supply chain to construct a simulator for policy learning and evaluation. On the theoretical side, this line of work has shown that for certain inventory control problems -- namely, those that can be cast in the exogenous decision process framework \citep{sinclair2023hindsight,madeka2022deep} -- one can reduce the reinforcement learning problem to supervised learning.

\begin{figure}[h]
	\centering
	\amazon{
		\input{internal_details/constraints_figure.tex}
	}{
		\includegraphics[width=0.45\textwidth]{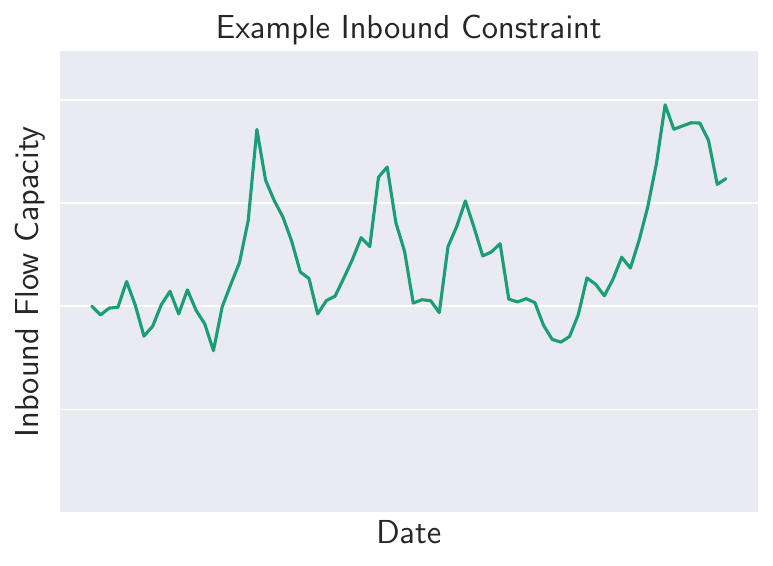} ~~~~~
		\includegraphics[width=0.45\textwidth]{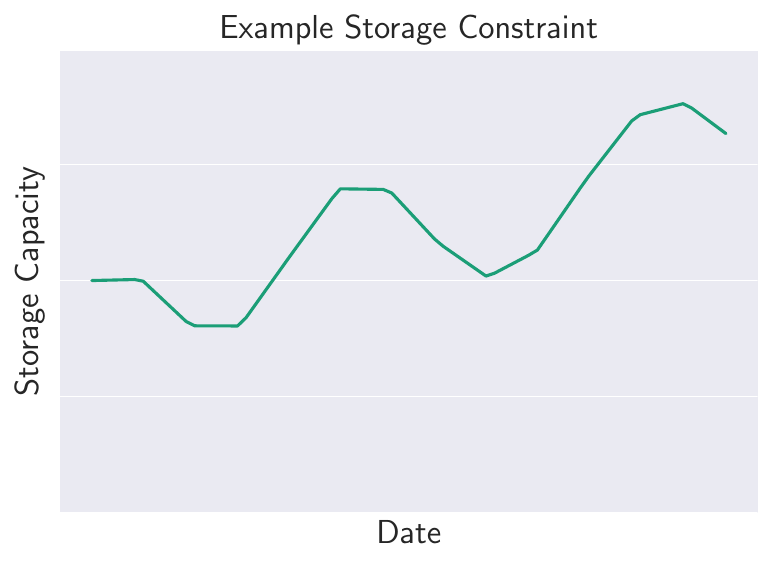}
		\caption{Stylized examples of inbound flow and storage constraints at a retailer.}
	}%
	\label{fig:sort-final-constraints}
\end{figure}

Because it is not tractable to solve the multi-product problem directly using dynamic programming, the canonical method for handling network constraints \citep{maggiar2022multi} is to solve a Lagrangian relaxation, thereby recovering separability between the product-level optimization problems. In practice, model predictive control is used to make replenishment decisions and set shadow prices on the shared resource\amazon{ \citep{ozkok2013acc}}{}.

Recent work showed that Deep RL based policies can improve profitability over sophisticated base stock policies in a series of large-scale real-world A/B tests at Amazon \citep{madeka2022deep,andaz2023learning\amazon{,andaz2022rlqot}{}}. In this paper we extend these approaches to handle {\it capacity constraints}. Motivated by similar questions of {\it learnability} and {\it backtestability}, we address the following gaps in the existing literature.

First, for any inventory management system, what does it mean to backtest a capacity control mechanism? Because capacity constraints are at the facility level (rather than the product-level), we only have a single sample path of the constraints Amazon's supply chain\footnote{Unlike backtesting the product-level rewards where we have $N$ replicates}.  To this end, we propose sampling from a distribution over many possible constraint paths that have ``similar'' structure to real world constraint paths. The key idea is that we can cover the space of all {\it possible} paths the agent might have seen. Because the coupling between the product level decision processes is weak, we are still able to obtain backtesting guarantees.

Second, while model predictive control could be used with an RL inventory control policy, ti do so would require forward simulating every feature the policy takes as input. For example, while traditional inventory control systems might require forecasts of demand, a Deep RL agent would require simulation of every feature it uses as input such as customer arrivals, costs and prices. Thus to use model predictive control, one must choose between restricting the feature set available to the RL agent or modelling the joint distribution of many complex processes. In this work, we propose a novel approach to solving this problem where we {\it forecast} the distribution of dual costs necessary to constrain the policy. We term this the {\it neural coordinator} which, given a target trajectory, produces forecasts of capacity prices that will constrain to the target. 
The neural coordinator approach is not limited to deep RL policies and can be used with traditional inventory management systems as well. As we show in \Cref{sec:experiment}, the neural coordinator produces price trajectories (for a fixed target date) that are martingale, unlike a model predictive controller.

Finally, we extend the exo-IDP formulation in \citet{madeka2022deep} to a class of capacitated periodic review inventory control problems and establish learnability results. This means we can {\it backtest} the capacitated inventory management system before using it in the real-world. We propose a modified DirectBackprop algorithm to learn a buying policy as well as a training scheme for the neural coordinator. We evaluate our proposed methodology through large-scale backtests and show that the neural coordinator with RL buying policy outperforms classic baselines in terms of both constraint adherence and cumulative discounted reward.

\section{Background and Related Work}
\label{sec:background}

\subsubsection*{Model Predictive Control}

Model predictive control consists of using a model to forward simulate a system to optimize control inputs and satisfy any constraints. At each time step, one re-plans based on updated information that has become available in order to select the next control input \citep{kwon1983stabilizing,garcia1989mpc,camacho2004model}. This is sometimes called {\it receding horizon control} because at each time step, one replans with a time horizon one step further in the future. 

\subsubsection*{Inventory Control}

Inventory control systems have been studied extensively in the literature under a variety of conditions (see \citet{porteus2002foundations} for a comprehensive overview). The simplest form is the newsvendor, which solves a myopic problem \citep{arrow1958studies}. Many extensions exist \citep{nahmias1979simple, arrow1958studies}, and the optimal policy in many variants takes the form of a {\it base stock policy} which consists of setting a target inventory level and then placing orders up to that level. \nocite{zipkin2008old} More recently, several works have applied Deep Reinforcement Learning to the inventory management problem \citep{madeka2022deep,\amazon{eisenach2022capacity,andaz2022rlqot,}{}andaz2023learning,alvo2023neural,mousa2023analysis,zhao2023policy,thomas2023towards,parmas2023model,gijsbrechts2022can, qi2023practical}

In the literature, several works have considered constrained inventory management. Typical settings include a production facility where a machine must be shared amongst the production of multiple products \citep{bretthauer1994model} or a retailer which has limited storage space and $N$ items \citep{maloney1993constrained, rosenblatt1990single, ziegler1982solving, ventura1988note, rosenblatt1981multi}. Constraints can also be across facilities, e.g. where a fixed quantity of goods must be split across multiple stores \citep{caro2010inventory, alvo2023neural}. The literature on assortment optimization considers a related problem where one has a constraint on the number of products that can be offered \citep{lo2021omnichannel}.

\nocite{xie2023vcinv}

\subsubsection*{Coordination Mechanisms}

A common formulation is to take a Lagrangian relaxation \citep{boyd2004convex} and couple across products via a resource cost \citep{maloney1993constrained}. In some cases, a dual ascent is performed (e.g. with ADMM) \citep{ziegler1982solving}, and in other cases, closed form solutions exist for (or heuristics are used to approximate) the optimal Lagrange multiplier \citep{rosenblatt1981multi, ventura1988note, rosenblatt1990single}. Recently, \citet{maggiar2024cpp} proposed the {\it Consensus Planning Protocol} (CPP) which considers multiple agents each optimizing their own utility. These agents have joint constraints (e.g. a shared resource) and the overall objective is to optimize the sum of utilities. This is closely related to a distributed ADMM procedure \citep{boyd2004convex,boyd2011foundations}. Our $N+1$-agent formulation of capacitated inventory management is a special case of the setting considered in \citet{maggiar2024cpp}, and we note that CPP could also be used in conjunction with our proposed neural coordinator.

\subsubsection*{Reinforcement Learning}

Reinforcement learning has been applied to sequential decision-making problems including games and simulated physics models \citep{silver2016mastering,szepesvari2010algorithms,mnih2013playing, sutton2020reinforcement,schulman2017proximal,mnih2016asynchronous}. Although in general, one can require exponentially many samples to learn a control policy, recent work has considered a class of decision problems where sample-efficient backtesting is possible just as in supervised learning \citep{madeka2022deep, sinclair2023hindsight}. These are called {\it exogenous interactive decision processes} wherein the state consists of a stochastic exogenous process (independent of the control) and an endogenous component that is governed by a known transition function $f$ of both the previous endogenous state and the exogenous process. Our proposed NCC is also related to imitation learning and has to handle similar issues with non-i.i.d. data \citep{ross2011reduction}.

We also build off recent work in the time-series forecasting literature \citep{oord2016wavenet,wen2017mqcnn,eisenach2020mqt}, and use similar architectures to construct our neural coordinator and for the buying agent's policy network.

\amazon{
	\input{internal_details/amazon_related.tex}
}{}

\subsubsection*{Mathematical notation}
Denote by $\mathbb{R}$, $\mathbb{R}_{\geq 0}$, $\mathbb{Z}$, and $\mathbb{Z}_{\geq 0}$ the set of reals, non-negative reals, integers, and non-negative integers, respectively. We let $(\cdot)_+$ refer to the classical positive part operator i.e. $(\cdot)^{+} = \max(\cdot, 0)$. Let $[\;\cdot\;]$ refer to the set of positive integers up to the argument, i.e. $[\;\cdot\;] = \{x \in \mathbb{Z} \, | \, 1 \leq x \leq \cdot \;\}$.  We use $\mathbb{E}^{\mathbb{P}}$ to denote an expectation operator of a random variable with respect to some probability measure $\mathbb{P}$. Let $||X,Y||_{TV}$ denote the total variation distance between two probability measures $X$ and $Y$.

\section{Constrained Sequential Decision Problems}
\label{sec:ncc-generic}
In this section we consider a general, constrained sequential decision problem similar to our target application in \Cref{sec:idp}; note that the notation defined in this section will not be used in the remainder of the paper.

The class of problem we are interested in solving are constrained Markov decision processes where for each time $t$, we have a set of $M$ constraints $g^m_{t}(s_t) \leq K_{t,m}$. Let $\Kb$ denote the sequence of constraints and similarly let $\Kb_{t,:} \in \RR^{M}$ denote the vector of constraints at time $t$. The constrained Markov decision process is described by the tuple $\cM := (\cS,\cX,\Kb,\PP,R,\gamma,s_0)$, where $\cS$ is the state space, $\cX$ the action space, $\PP$ the probability transition kernel,  $R : \cS\times\cS\times\cX \rightarrow [0,1]$ the reward function, discount factor $\gamma \in [0,1)$, and initial state $s_0 \in \cS$. Denoting by $\Pi$ some set of policies, of which we assume at least one is feasible, the goal is to solve
\begin{align*}
	\max_{\pi \in \Pi}  &~J(\pi; \Kb) := ~\mathbb{E}^{\PP}\Biggl[\sum_{t=0}^{\infty} \gamma^t R_t \Biggr] \numberthis \label{eqn:obj-mdp}      \\
	\text{subject to: }     &  \\
	& \EE^{\PP}\left[g^m_{t}(s_t) \right] \leq K_{t,m}, ~ \forall t \in \ZZ^{\geq 0}, m \in [1,M].
\end{align*}
Note that although we selected a specific information model for the constraints, \eqref{eqn:obj-mdp} is quite general and any classical constrained MDP formulation can be recast similarly.  Depending upon the specific state space, action space and policy class, there are two types of approaches one might use to solve constrained MDPs:
\begin{itemize}
	\item linear programming (dynamic programming is not applicable in the constrained setting) or
	\item Lagrangian relaxation. 
\end{itemize}
We are interested in the second class of methods as linear programming approaches are not as widely applicable. Taking the Lagrangian relaxation we get
\begin{equation}
	\label{eqn:obj-mdp-lgr-primal}
	\cL(\pi;\blambda) := ~\mathbb{E}^{\PP}\Biggl[\sum_{t=0}^\infty \gamma^t R_t   + \sum_{t=0}^\infty \sum_{i=1}^M \lambda_{m,t} (K_{m,t} - g^m_{t}(s_t)) \Biggr].
\end{equation}
The optimization problem then becomes
\begin{align*}
	\min_{\blambda} \max_{\pi \in \Pi}~  &\cL(\pi;\blambda)  \numberthis \label{eqn:obj-mdp-lgr}      \\
	\text{subject to: }     &  \\
	& \blambda \geq 0,
\end{align*}


\subsection{Standard Approach: Model Predictive Control (MPC)}
A standard way to solve \eqref{eqn:obj-mdp-lgr} is with {\it model predictive control}. The model predictive control procedure takes the current state $s_t$, a model of the system dynamics $\hat{\PP} \approx \PP$ and a constraint values $\Kb$ as input, and produces the next action $a_t$.
\paragraph{Adaptive MPC}
In adaptive model predictive control, one estimates a model of the system dynamics $\hat{\PP}$ (based on the available data up through time $t$) and then solves for the optimal sequence of actions over the next $H$ periods by assuming $\hat{\PP}$ is the true model. Formally, at time $t$ starting from state $s_t$, the MPC policy optimizes the next $H$ actions directly by solving 
\begin{align*}
	\max_{a_t,\dots,a_{t+H}}  &~\mathbb{E}^{\hat{\PP}}\Biggl[\sum_{s=t}^{t+H} \gamma^t R_s \Biggr] \numberthis \label{eqn:obj-mpc}      \\
	\text{subject to: }     &  \\
	& \EE^{\hat{\PP}}\left[g^m_{t}(s_t) \right] \leq K_{t,m}, ~ \forall t \in [0,T], m \in [1,M].
\end{align*}
where the objective function and state evolution correspond to the MDP $\hat{\cM} := (\cS,\cX,\hat{\PP},R,\gamma,H,s_t)$. An MPC solution method may solve the lagrangian relaxation instead:
\begin{align*}
	\min_{\hat{\blambda}} \max_{a_t,\dots,a_{t+H}}~  & \mathbb{E}^{\hat{\PP}}\Biggl[\sum_{s=t}^{t+H} \gamma^{s-t} R_s   + \sum_{s=t}^{t+H} \sum_{i=1}^M \hat{\lambda}_{m,s} (K_{m,s} - g^m_{s}(s_s)) \Biggr]  \numberthis \label{eqn:obj-mpc-lgr}      \\
	\text{subject to: }     &  \\
	& \blambda \geq 0.
\end{align*}
Denoting by $a^*_{t,0},\dots,a^*_{t, H}$ the primal solution to either \eqref{eqn:obj-mpc} or \eqref{eqn:obj-mpc-lgr}, the MPC policy is defined as $\pi^{MPC}(s_t) := a^*_{t,0}$. 

\paragraph{Policy-Based MPC}
Along the lines of \eqref{eqn:obj-mpc-lgr}, another common approach is to solve \eqref{eqn:obj-mpc-lgr} but with a restricted policy class. For example, in an inventory control setting (see \Cref{sec:idp}), practitioners may use a newsvendor (or base-stock) policy. In that case, one typically augments the state representation from \eqref{eqn:obj-mpc} with the dual costs over the planning horizon, $\blambda \in \RR^{H\times M}_{\geq 0}$, and the model predictive control problem becomes
\begin{align*}
	\min_{\hat{\blambda}} \max_{\pi \in \Pi} ~  & \mathbb{E}^{\hat{\PP}}\Biggl[\sum_{s=t}^{t+H} \gamma^{s-t} R_s   + \sum_{s=t}^{t+H} \sum_{i=1}^M \hat{\lambda}_{m,s} (K_{m,s} - g^m_{s}(s_s)) \Biggr]  \numberthis \label{eqn:obj-mpc-lgr-2}      \\
	\text{subject to: }     &  \\
	& \hat{\blambda} \geq 0,
\end{align*}
where the actions are sampled from some policy $\pi: \cS \times \RR^{H\times M}_{\geq 0} \rightarrow \Delta(\cX)$. Note that \eqref{eqn:obj-mpc} is a strict generalization of the case where we have a fixed policy $\pi$ and are performing dual descent on the constraint costs.

\begin{remark}[Including dual costs as part of the state representation] Technically, one does not need to include $\blambda$ as part of the state representation because the formulation above considers only a single $\Kb$. In practical settings, however, the constraint at period $t$ may only be known a $H$ periods beforehand. To use the IDP in \Cref{sec:idp} as an example, a retailer might build or purchase additional storage space if they expected to be overly constrained at time $t$. By explicitly including $\blambda$ as part of the state, the idea would be that the policy can handle different constraint settings $\Kb$.
\end{remark}

\subsection{Our Approach: Cost Forecasting}
Our proposed approach can be viewed as forecasting the dual costs $\blambda$ produced by an MPC that has access to the true MDP $\cM$ rather than an estimate $\hat{\cM}$ -- in the terminology of \Cref{sec:idp}, this would be forecasting the costs produced by a dual cost search against historic realizations. Our procedure requires access to $\cM$ and the ability to reset to any state $s$ (in order to perform the cost search).  To make this precise, we are interested in the solution to the following optimization
\begin{align*}
	\min_{\blambda^*} \max_{\pi \in \Pi} ~  & \mathbb{E}^{\PP}\Biggl[\sum_{s=t}^{t+H} \gamma^{s-t} R_s   + \sum_{s=t}^{t+H} \sum_{i=1}^M \lambda^*_{m,s} (K_{m,s} - g^m_{s}(s_s)) \Biggr]  \numberthis \label{eqn:obj-mpc-exact}      \\
	\text{subject to: }     &  \\
	& \blambda^* \geq 0.
\end{align*}
Many algorithms could be used to solve the inner optimization, including RL by incorporating the constraint penalties into the reward function. In the remainder of this section we will denote the optimization solved in \eqref{eqn:obj-mpc-exact} as $\cP(\Kb', s, t)$ where $\Kb'$ are the constraints (e.g. $\Kb_{t:t+H,:}$) in $\RR^{H\times M}$, $s \in \cS$ is the initial state and $t$ is the starting time. Let $\blambda^*_{s,:}$ denote the component of the optimal solution to \eqref{eqn:obj-mpc-exact} corresponding to time $s\in [t,t+H]$. The key idea is at each time $t$ to produce a probabilistic forecast  given the current state and the constraints over the next $H$ periods:
\begin{equation}
	\label{eqn:cost-forecast}
	p(\blambda^*_{t,:},\dots,\blambda^*_{t+H,:} | s_t, \Kb_{t:t+H,:}).
\end{equation}
Observe that \eqref{eqn:cost-forecast} is combining two forecasting problems in a single end-to-end forecast. By forecasting $\blambda^*$ directly, there is no need to forecast the dynamics $\PP$ nor solve the outer optimization in \eqref{eqn:obj-mpc-exact}.

\subsubsection*{Learning Procedure}
To present the learning procedure, we use a simplified form of \eqref{eqn:cost-forecast} where the goal is to predict a single summary statistic (e.g. the mean) of the distribution in \eqref{eqn:cost-forecast} over the next $H$ periods. Additionally, let $\pi^{fixed}: \cS \times \RR^{H\times M}_{\geq 0} \rightarrow \Delta(\cX)$ be a fixed policy and $\Pi := \{ \pi^{fixed} \}$ the policy class in \eqref{eqn:obj-mpc-exact}.

Now, consider a class of regression models parameterized by  $\omega \in \Omega$, $\phi_{\omega} : \cS \times \RR^{H\times M} \times \rightarrow \RR^H$. Let $\cA^{sup}$ denote a supervised learning procedure that given a dataset $\cD := \{ (s^i,\Kb^i,\blambda^i) \}$ produces a model $\omega$ that minimizes a loss $l : \cS  \times \RR^{H\times M} \times \RR_{\geq 0}^{H\times M} \rightarrow \RR $ on the dataset $\cD$. Additionally, let $P^G$ be a distribution over the space of possible constraint sequences. The simplest thing to do would be to sample some constraint sequences from a distribution $P^K$, run rollouts under $\pi^{fixed}$ to generate a dataset $\cD_0$ and then obtain $\omega_0 = \cA^{sup}(\cD_0)$. However when the predictive model $\omega_0$ is used to produce costs used as inputs to $\pi^{fixed}$ in $\cM$, it may not properly adhere to constraints. The issue is that $\omega_0$ is fit with supervised learning, but the data generating process in $\cM$ under $\pi^{fixed}$ and $\omega_0$ is not i.i.d. The same problem also occurs in imitation learning \citep{ross2011reduction}, and \Cref{alg:ncc-generic} is inspired by the DAGGER algorithm from the imitation learning literature. 

\begin{algorithm}[H]
	\caption{Neural Coordinator}
	\label{alg:ncc-generic}
	\begin{algorithmic}
		\State \textbf{Inputs: } $\omega'$,  $P^K$, $T$, $\pi: \cS \times \RR^{H\times M}_{\geq 0} \rightarrow \Delta(\cX)$, $\cA^{sup}$, horizon $H$
		\State $\cD^{0} \gets \emptyset$
		\State $n \gets 0$
		\State $\omega^0 \gets \omega'$
		\State \texttt{\# Keep updating until converged}
		\While{not converged}
			\State $n \gets n+1$
			\State Sample $\Kb^n \sim P^K$
			\State Run $T$-step rollout in $\cM$ under $\pi$ and $\omega^{n-1}$
			\State From rollout, construct $\cU^n := \{(0, s^n_0, \Kb^n_{t:t+2H:, }),\dots,(T, s^n_T, \Kb^n_{T:T+2H:, })\}$
			\State \texttt{\# Augment existing dataset with states from current trajectory}
			\For{ $(t, s^j, \Kb^j) \in \cU^n$}
				\State $\lambda^{j,*} \gets \cP(\Kb^j, s^j, t)$ with policy class $\Pi =\{ \pi \}$.
				\State $\cD^n \gets \cD^{n-1} \cup \{ (s^j, \Kb^j, \blambda^{j,*}) \} $
			\EndFor
			\State  \texttt{\# Update coordinator}
			\State $\omega^n \gets \cA^{sup}(\cD)$
		\EndWhile
		\State \textbf{Returns: } $\omega^n$
	\end{algorithmic}
\end{algorithm}

\paragraph{Non-stationary MDPs} Even if we do not have access to the actual MDP, the approach described above can still be desirable. If the dynamics are non-stationary, as is often encountered in the real-world, we may have a much better estimate of the dynamics for times $s \leq t$ than we do for $s > t$, where $t$ is the current time. For example, in the exo-MDP framework \citep{sinclair2023hindsight}, one might have access to samples from the actual noise process for $s \leq t$.

\section{IDP Construction, Capacity Constraints and Coordination Mechanisms}
\label{sec:idp}

In this section, we follow the Interactive Decision Process (IDP) formulation of \citet{madeka2022deep}, borrowing most of the conventions and notation. A planner manages a set $\cA$ of products, and for each $i \in \cA$ and at each time step $t = 1,2,\dots,T$, the planner is trying to determine how many units of inventory to order to satisfy demands $D^i_t$. In this work, we consider the addition of {\it capacity constraints} into the problem formulation. At each time $t$, there is a constraint $K^1_T$ on storage capacity and another constraint $K^2_T$ on arrivals (or inbound) capacity.

This section is organized as follows: In \Cref{sec:idpconstruction}, starting with the formulation in \citet{madeka2022deep}, we add capacity constraints to the problem formulation. Then in \Cref{sec:idpconstruction-lagrangian} we  formally define the notion of a {\it coordination mechanism} that will be used in solving a Lagrangian relaxation of the constrained problem. Finally, we define the Lagrangian IDP which considers a problem similar to the inner optimization of the Lagrangian relaxation.

\subsection{Capacity Constrained IDP}
\label{sec:idpconstruction}

Our IDP is governed by external (exogenous) processes, a control process, inventory evolution dynamics, a sequence of constraints, and a reward function. The inventory management problem seeks to find the optimal inventory level for each product $i$ in the set of retailer's products, which we denote by $\mathcal{A}$. We assume our exogenous random variables are defined on a canonical probability space $(\Omega, \mathcal{F}, \mathbb{P})$, and policies are parameterized by $\theta$ in some parameter set $\Theta$. Additionally, there are constraints on total storage and total inbound at each time step $t$, denoted $K^1_t$ and $K^2_t$, respectively. The constraints are known at time $t=0$ and are exogenous to the buying policy. 

\subsubsection*{External Processes}
At every time step $t$, for product $i$, there is a random demand process $D_t^{i} \in [0,\infty)$ that corresponds to customer demand during time $t$ for product $i$. The random variables $p_t^{i} \in [0,\infty)$ and $c_t^{i} \in [0,\infty)$  correspond to selling price and purchase cost. Finally, the random variable $v_t^i \in \ZZ_{\geq 0}$ denote vendor lead time for an order placed at period $t$ for item $i$. Each product also has a {\it storage weight} and {\it inbound weight} that are used to weight the contribution of units of product $i$ to consumption of the shared constraints (for example these might be volumes, or time needed to process an arrival of that product). We denote these weights by $w^i$ and $u^i$, respectively.  Our exogenous state vector for product $i$ at time $t$, which takes values in $\cS$, is all of this information\footnote{For ease of exposition, the non-time-varying $w^i$ and $u^i$ are included in the state vector at each time step.}:
\[
	s^i_t := (D_t^{i}, p_t^{i}, c_t^{i}, v_t^i, w^i, u^i) \in \cS.
\]
To allow for the most general formulation possible, we consider policies that can leverage the history of all products. Therefore, we will define the history 
\[
	H_t :=\{( s_1^i,\dots,s_{t-1}^i)\}_{i=1}^{|\cA|}
\]
as the joint history vector of the external processes for all the products.

\subsubsection*{Control Processes}
Our control process will involve, at each time $t$, picking actions $\ab_t \in \mathbb{R}_{\geq 0}^{|\mathcal{A}|} $ for each product jointly. For product $i$, the action taken is the order quantity for product $i$ and denoted by $a_t^{i} \in \mathbb{R}_{\geq 0}$.

\subsubsection*{Inventory Evolution}
We assume that the implicit endogenous inventory state follows standard inventory dynamics and conventions. Inventory arrives at the beginning of the time period, so the inventory state transition function is equal to the order arrivals at the beginning of the week minus the demand fulfilled over the course of the week. The number of units arriving during the period is denoted
\begin{equation}
\label{eqn:invarrivals}
J_t^i = \sum_{0 \leq k < t} a^i_k \bbone_{v^i_k = (t-k)}
\end{equation}
and the inventory update is therefore
\begin{equation}
\label{eqn:invupdate}
I^i_{t_-} = I^i_{t-1} + J_t^i,
\end{equation}
where $I_t^i$ is the inventory at the end of time $t$, and $I_{t_-}^i$ is the inventory at the beginning of time $t$, after arrivals but before demand is fulfilled. Then, at the end of time $t$, the inventory position is:
\[
I^i_{t} = \min(I^i_{t_-}- D^i_t, 0).
\]
We additionally define the {\it aggregate} inventory and inbound as 
\[
	\tilde{I}_t := \sum_{i \in \cA} w^i I_t^i	~~\text{ and }~~ \tilde{J}_t := \sum_{i \in \cA} w^i J_t^i.	
\]

\subsubsection*{Constraints}
The constraint sequence, denoted as
\[
  G := \{K^1_1,K^2_1,\dots,K^1_{T},K^2_{T}\},
\]
is known to the decision maker at time $t=0$. Following the approach taken in the literature, we consider constraints in expectation at each time $t$
\[
K_t^1  \geq \EE^{\PP}\left[\tilde{I}_t\right] ~~\text{ and }~~ K_t^2 \geq\EE^{\PP} \left[\tilde{J}_t\right]. 
\]

\subsubsection*{Reward Function}
The reward at time $t$ for product $i$ is defined as the selling price times the total fulfilled demand, less the total cost associated with any newly ordered inventory (that  will be charged by the vendor upon delivery):
\begin{equation}
\label{eqn:rewardfunc}
    R_t^{i} = p_t^i \min(D^i_t, I^i_{t_-}) -  c_t^i a_t^i.   
\end{equation}

\subsubsection*{Policy Class}
For a class of policies parameterized by $\theta$, we can define the actions as
\[
	a_t^{i} = \pi_{\theta,t}^{i}(H_t, G).
\]
We characterize the set of these policies as $\Pi = \{\pi_{\theta,t}^{i}| \theta \in \Theta, i \in \mathcal{A}, t \in [0,T]\}.$

\subsubsection*{Optimization Problem}
We will write $R^i_t(\theta)$ to emphasize that the reward is a function of the policy parameters $\theta$. Recall that selling price and buying cost are determined exogenously. We assume all rewards $R_t^{i} \in [R^{min},R^{max}]$, and assume a multiplicative discount factor of $\gamma \in [0,1]$ representing the opportunity cost of reward to the business. Again, we make the dependence on the policy explicit by writing $R_t^{i}(\theta)$. The objective is to select the best policy (i.e., best $\theta \in \Theta$) to maximize the total discounted reward across all products, expressed as the following optimization problem:
\begin{align*}
	\max_{\theta}  &~\mathbb{E}^{\PP}\Biggl[\sum_{i\in \mathcal{A}} \sum_{t=0}^T \gamma^t R_t^{i}(\theta)\Biggr] \numberthis \label{eqn:obj-cc}      \\
	\text{subject to: }     &  \\
K_t^1  &\geq \EE^{\PP}\left[\tilde{I}_t\right] \\
K_t^2 &\geq\EE^{\PP}\left[\tilde{J}_t\right].
\end{align*}
with state transition dynamics governed by
\begin{align*}
	I^i_0                   & = k^i  \\
	a^i_t                   & = \pi_{\theta,t}^i(H_t, G) \\
	J_t^i &= \sum_{0 \leq k < t} a^i_k \bbone_{v^i_k = (t-k)} \\
	I^i_{t_-} &= I^i_{t-1} + J_t^i \\
	I^i_{t} & = \min(I^i_{t_-}- D^i_t, 0).
\end{align*}
Here, $\mathbb{P}$ denotes the joint distribution over the exogenous processes. The inventory $I_0^i$ is initialized at $k_i$, a known quantity \emph{a priori}.

\subsection{Approximating the Constrained IDP: Lagrangian IDP}
\label{sec:idpconstruction-lagrangian}

While problem \eqref{eqn:obj-cc} is what we want to solve, in general it is not possible to solve it directly. Instead we consider a Lagrangian relaxation \citep{boyd2004convex} 
\begin{equation}
	\label{eqn:lagrange-obj}
	\min_{\blambda^1, \blambda^2 \geq 0} \max_{\theta}  ~\mathbb{E}^{\PP}\Biggl[\sum_{i\in \mathcal{A}} \sum_{t=0}^T \gamma^t R_t^{i}(\theta) + \sum_{t=0}^T\lambda^1_t(K_t^1-\tilde{I}_t) + \sum_{t=0}^T\lambda^2_t(K_t^2 - \tilde{J}_t) \Biggr],
\end{equation}
which restores the separability across products.

In the remainder of this section, we describe a multi-agent interactive decision process wherein one agent sets prices on the constrained resources ($\blambda^1$, $\blambda^2$). Specifically, we construct the penalized IDP that corresponds to the Lagrangian in \eqref{eqn:lagrange-obj}. This can be thought of as an $|\cA|+1$ agent problem, where there are $|\cA|$ product level policies, and one coordinator agent that sets capacity prices. We describe only the components that differ from the constrained problem.

\paragraph{Coordination Mechanisms}
At each time $t$, a coordination mechanism sets two prices $\blambda_t := (\lambda_{t,0}^1,\lambda_{t,0}^2)  \geq 0$, corresponding to $K_t^1$ and $K_t^2$. The coordinator also announces future prices over the next $L$ periods, which we denote by $\hat{\blambda}^L_t$ as they can be viewed as a forecast the price. Formally, an $L$-step {\it coordination mechanism} is a sequence of vector-valued functions $F_t$ that maps a constraint sequence and joint product history to the next $L$ shadow prices
\[
	F_t(G, H_t) = (\lambda_{t,0}^1, \lambda_{t,0}^2,\hat{\lambda}_{t,1}^1, \hat{\lambda}_{t,1}^2,\dots,\hat{\lambda}_{t,L}^1, \hat{\lambda}_{t,L}^2).
\]
The coordination mechanism could be as simple as deterministically announcing the dual variable values in \eqref{eqn:lagrange-obj} if one uses an ADMM-style algorithm to solve the constrained optimization (see \Cref{rem:pen-reward} below). Other candidate mechanisms are model predictive control and neural cost forecasting proposed in \Cref{sec:ncc-generic}.

\subsubsection*{Control Processes} The history of prices announced up until time $t$ is defined as $H_t^{\lambda} := \{\lambda_{s,l}^j ~~|~~   j \in \{1,2\},~l \in \{0,\dots,L\},~s \leq t \}$. At the product level, for a class of policies parameterized by $\theta$, we can define the actions as
\[
	a_t^{i} = \pi_{\theta,t}^{i}(H_t,H^\lambda_t).
\]
We characterize the set of these policies as $\Pi = \{\pi_{\theta,t}^{i}| \theta \in \Theta, i \in \mathcal{A}, t \in [0,T]\}.$

\subsubsection*{Reward Function} We modify the previous reward \eqref{eqn:rewardfunc} to incorporate a penalty according to the prices set by the coordination mechanism 
\begin{equation}
\label{eqn:rewardfunc-pen}
    R_t^{\lambda, i} = R_t^i - \lambda_t^1 w^i I_{t}^i - \lambda_t^2 u^i J_{t}^i . 
\end{equation}

\subsubsection*{Optimization Problem}
As above we write $R^{\lambda,i}_t(\theta)$ to emphasize that the reward is a function of the policy parameters $\theta$. Given a coordination mechanism, we solve the following optimization problem:
\begin{equation}
	\label{eqn:lagrangian-idp}  
	\max_{\theta}  ~ J^{pen}(\theta, G) := \mathbb{E}^{\PP}\Biggl[\sum_{i\in \mathcal{A}} \sum_{t=0}^T \gamma^t R_t^{\lambda, i}(\theta)  \Biggr],	
\end{equation}
with dynamics governed by
\begin{align*}
	I^i_0                   & = k^i  \\
	(\lambda_t^1, \lambda_t^2,\hat{\lambda}_{t,1}^1, \hat{\lambda}_{t,1}^2,\dots,\hat{\lambda}_{t,L}^1, \hat{\lambda}_{t,L}^2) &= F_t(G, H_t) \\
	a^i_t                   & = \pi_{\theta,t}^i(H_t, H^\lambda_t) \\
	J_t^i &= \sum_{0 \leq k < t} a^i_k \bbone_{v^i_k = (t-k)} \\
	I^i_{t_-} &= I^i_{t-1} + J_t^i \\
	I^i_{t} & = \min(I^i_{t_-}- D^i_t, 0).
\end{align*}
Having defined the optimization problem, it is now clear why we constructed the penalized reward \eqref{eqn:rewardfunc-pen}.
\begin{remark}[Motivating the penalized reward]
	\label{rem:pen-reward}
Let 
\[\blambda' := (\lambda^{1,'}_{0},\lambda^{2,'}_{0},\dots\lambda^{1,'}_{T},\lambda^{2,'}_{T})\] be a some sequence of costs. We can define a ``teacher-forcing'' coordination mechanism for optimizing \eqref{eqn:lagrangian-idp} as
\begin{equation}
	\label{eqn:teacher-force}
	F_t(G, H_t) := (\lambda^{1,'}_{t}, \lambda^{2,'}_{t},\lambda^{1,'}_{t}, \lambda^{2,'}_{t}\dots,\lambda^{1,'}_{t+L},\lambda^{2,'}_{t+L}).
\end{equation}
Under this specific coordination mechanism, in an abuse of notation we can write $J^{pen}(\theta, \blambda')$. We see that $\max_{\theta} J^{pen}(\theta, \blambda')$ is {\it equivalent up to constants} to the inner optimization in \eqref{eqn:lagrange-obj}.
\end{remark}
Given a fixed $G$, an ADMM-like \citep{boyd2011foundations} procedure can be used to alternate updating the solution to the inner optimization problem and the dual prices $\blambda'$ and one can augment the Lagrangian objective for stability (see \Cref{sec:alg-details}).

\subsection{Neural Coordinator}
\label{sec:neural-coord}
One example of a coordination mechanism is model predictive control, which would use forecasted demand to perform a dual cost search for the next $L$ periods. However, for an RL buying policy that uses many historical features, model predictive control would require forward simulating many features and it is difficult to accurately forecast the full joint distribution of all these features. 

Instead, we follow the approach described in \Cref{sec:ncc-generic} and propose to use a neural network to {\it forecast} the capacity costs that would be required to constrain the policy. This forecaster then can be used as an $L$-step coordination mechanism.  Specifically, we learn a neural network parameterized by $\omega \in \Omega$, $\phi_{\omega,t}(H_t, G)$, that outputs 
\[
	\phi_{\omega,t}(H_t, G) = (\lambda_{t,0}^1, \lambda_{t,0}^2, \dots,\lambda_{t,L}^1,\lambda_{t,L}^2)
\]
where $\lambda_{t,l}^j$ is a forecast of the final cost $\lambda_{t+l,0}^j$.

Our proposed approach is to use \Cref{alg:ncc-generic} to predict the behavior of a dual cost search. In \Cref{sec:neural-coordinator}, we describe an approximation to \Cref{alg:ncc-generic} that we use in our empirical work and which is able to leverage the DirectBackprop algorithm to optimize the coordinator's decision.

\section{Capacity Curve Sampling}
\label{sec:constraint-sampling}

Because we only observe one sample historically of Amazon's capacity in any given marketplace, if we backtest against a capacity control mechanism against this single capacity curve, we would not be able to obtain any sort of generalization guarantee. For example, in a world where Amazon did not offer one day shipping, the network capacity would have looked different through time.  In order for a capacity control mechanism to be useful in practice, we need to ensure that in many scenarios, the coordination mechanism will properly constrain. Intuitively, what we would like to backtest against is a large set of capacity curves that (in different worlds) Amazon might have had instead. 

One salient property of real-world capacity curves is that they tend to have discontinuities as capacity comes online or goes offline. In order to sample paths from some space, we must first decide on a choice of function space and a basis on that space. \citet{donoho1992unconditional} showed that wavelet bases are optimal for representing functions that have arbitrary discontinuities. \Cref{fig:example-paths} shows examples of generated constraint paths.

\begin{figure}[h]
	\centering
	\amazon{\includegraphics[width=0.4\textwidth]{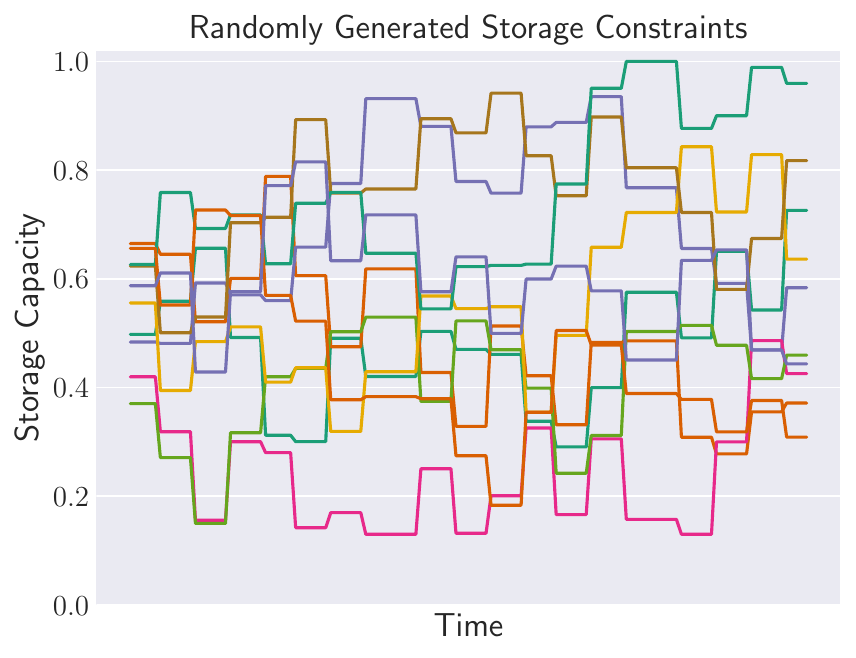} ~~ \includegraphics[width=0.4\textwidth]{plots/css_acc_sortld_final.pdf}}{\includegraphics[width=0.4\textwidth]{plots/path_examples_normalized.pdf} ~~ \includegraphics[width=0.4\textwidth]{plots/ext_sortld_storage_stylized.pdf}}
	\caption{\amazon{Example of constraint paths generated by our proposed sampling scheme on the left, and an actual constraint path on the right.}{Example of constraint paths generated by our proposed sampling scheme on the left, versus a stylized example on the right.}}
	\label{fig:example-paths}
\end{figure}

\subsubsection*{Haar Wavelet Basis}
The Haar wavelet function is defined as 
\begin{equation}
	\label{eqn:haar-wavelet}
	\psi(x) = \begin{cases}
		1 & \text{ if } 0 \leq x < \frac{1}{2} \\
		-1 & \text{ if } \frac{1}{2} \leq x < 1 \\
		0 & \text{ o.w.}
	\end{cases}
\end{equation}
Then, for every $n,k \in \ZZ$ the Haar function is defined as
\begin{equation}
\label{eqn:haar-function}
\psi_{n,k}(t) = 2^{n/2}\psi(2^nt - k).
\end{equation}
We are interested only in functions supported on $[0,1]$, so we consider the Haar System on $[0,1]$ \citep{haar1910zur} which is the set of wavelets
\begin{equation}
\{ \psi_{n,k} : n \in \ZZ_{\geq 0}, 0 \leq k < 2^n \},
\end{equation}
and it is an orthonormal basis of $L^2([0,1])$. 

\subsubsection*{Constraint Space}
Mathematically, a {\it capacity curve} is function $f: [0,1] \rightarrow \RR_{\geq 0}$. First, we define a modified Haar wavelet
\begin{equation}
	\label{eqn:haar-function-mod}
	\tilde{\psi}_{n,k}(t) = \psi(2^nt - k),
\end{equation}
which gives a basis that is orthogonal (but not normalized). This will allow us to sample in a manner where it is simple to control the total variation of the generated paths. Then we define the Haar basis of order $m$ as
\begin{equation}
	\label{eqn:haar-basis-m}
	\cH_m = \Bigl\{ \tilde{\psi}_{n,k} : n \in \{0,\dots,m\}, 0 \leq k < 2^n \Bigr\}.
\end{equation}

\subsubsection*{Sampling Scheme}
Our sampler is characterized by the following: (1) the order $m$ of the Haar basis, (2) a scale $\nu$ that controls the total variation of the generated paths. To obtain paths on $[1,T]$, we sample from a space of paths supported on [0,1] and then rescale to obtain a path on $[1,T]$. Specifically, in \Cref{sec:experiment} we sample coefficients from $\cN(0,\nu/(2^{m+1} - 1))$.

\section{Learnability Results}
\label{sec:learn}

Now we turn to the question: can we backtest policy performance even in the presence
of capacity constraints and a capacity control mechanism? As in \citet{madeka2022deep},
we assume that the products are independent -- i.e. that $\PP$ is a product measure
$\PP = \prod_i \PP^i$ and that actions are determined fully by exogenous random variable,
the policy parameterized by $\theta \in \Theta$ and the coordinator,
parameterized by $\omega \in \Omega$.

In the uncapacitated setting, the total reward is a sum of independent random variables,
so a Hoeffding bound provides a strong generalization guarantee as long as the number of
products is large. The existence of capacity constraints couple all the products, breaking this approach.
However, it turns out that we can still provide a generalization bound, as long as the
coupling between products induced by the constraints and coordinator are \emph{weak}.

To make this formal in \Cref{asm:mcd_asm}, we define 
\[
	\bar{H}^i_t :=  (H_t \setminus \{(\bar{s}_1^i \dots \bar{s}_{t-1}^i)\}) \cup \{(s_1^i \dots s_{t-1}^i)\}
\]
as the history with the $i$th exogenous series modified using $(\bar{s}_1^i \dots \bar{s}_{t-1}^i) \in \bigotimes_{s=1}^{t-1} \cS$. Further, for any constraint path $G$, any policy $\theta \in \Theta$ and any
coordinator $\omega \in \Omega$, we will denote the combined policy (buying agent and coordinator) at time $t$ as $\ba_t = \pi_{\theta, \omega}(H_t, G)$. 

\begin{assumption}[Single-product Robustness]
	\label{asm:mcd_asm}
For any product $i$ and any modification $(\bar{s}_1^i \dots \bar{s}_{t-1}^i) \in \bigotimes_{s=1}^{t-1} \cS$ there exists a constant $c_a$ such that for all $t$, $\|\pi_{\theta, \omega}(H_t, G) - \pi_{\theta, \omega}(\bar{H}^j_t, G)\|_1 \leq c_a$.
\end{assumption}

Informally, this assumption assures us that the coordinator is reasonably well-behaved.
If we counterfactually change a single product, we expect at most an $O(1)$
change in capacity usage. Suppose extra capacity is occupied by this one product, then
$O(1)$ change in aggregate ordering across all products is required to meet the capacity.

We use reward $R_t^{i}(\theta, \omega, H_t, G)$ as a notational convenience for the reward as a function of $\theta$ and $\omega$, the exogenous history $H_t$ and
the capacity curves $G$. Given a curve $G$, we are interested in measuring
\begin{equation*}
V^i_T(\theta, \omega, G) := \EE^{\PP} \left[\sum_{t=1}^T \gamma^t R^i_t(\theta, \omega, H_t, G)\right],
\end{equation*}
with the corresponding population-level objective given by
\begin{equation*}
	V_T(\theta, \omega, G) := \frac{1}{|\cA|} \sum_{i\in\cA}V^i_T(\theta, \omega, G).
\end{equation*}
The notation $\hat{V}^{i}_T(\theta, \omega, G)$ and $\hat{V}_T(\theta, \omega, G)$ denote the empirically 
sampled versions.

We also are interested in the expected constraint violation
and sampled constraint violation defined as
\begin{equation*}
C^1_T(\theta, \omega, G) := \EE^{\PP} \left[\sum_{t=1}^T \left(\sum_{i \in \mathcal{A}} w^i I_t^i - K_t^1 \right)_{+}\right],
\end{equation*} and
\begin{equation*}
C^2_T(\theta, \omega, G) := \EE^{\PP} \left[\sum_{t=1}^T \left(\sum_{i \in \mathcal{A}} u^i J_t^i - K_t^2 \right)_{+}\right],
\end{equation*}
with sampled constraint violations $\hat{C}^1_T(\theta, \omega, G)$ and $\hat{C}^2_T(\theta, \omega, G)$ defined
analogously. \Cref{thm:learn-reward-v2} shows that we can efficiently
backtest any policy $(\theta, \omega) \in \Theta \times \Omega$ on any constraint path $G \in \mathcal{G}$, where $\mathcal{G}$
is a set of constraint paths of interest\footnote{We can approximately cover all paths with $O(\exp(T))$ paths.}.
\begin{theorem}
	\label{thm:learn-reward-v2}
Let $\Theta$ and $\Omega$ be finite sets and let \Cref{asm:mcd_asm} hold. Given any $\delta \in (0,1)$, with probability greater than $1-\delta$
	we have that for all $(\theta, \omega) \in \Theta \times \Omega$:
\begin{align*}
	\max_{G \in \mathcal{G}}|\hat{V}_T(\theta, \omega, G) - V_T(\theta, \omega, G)| \leq & c_a(p_{\max} + c_{\max})T^2\left( \sqrt{\frac{\log\left(2|\Theta||\Omega||\mathcal{G}|/\delta\right)}{2|\cA|}}\right)\\
	\max_{G \in \mathcal{G}}|\hat{C}^1_T(\theta, \omega, G) - C^1_T(\theta, \omega, G)| \leq & c_a w_{\max}T^2\left( \sqrt{\frac{|\cA|\log\left(2|\Theta||\Omega||\mathcal{G}|/\delta\right)}{2}}\right)\\
	\max_{G \in \mathcal{G}} |\hat{C}^2_T(\theta, \omega, G) - C^2_T(\theta, \omega, G)| \leq & c_a u_{\max}T^2\left( \sqrt{\frac{|\cA|\log\left(2|\Theta||\Omega||\mathcal{G}|/\delta\right)}{2}}\right)
\end{align*}
	where $p_{\max},c_{\max}, w_{\max}, u_{\max}$ are maximal prices, costs, and weights respectively.
\end{theorem}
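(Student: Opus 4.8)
The plan is to exploit the product structure $\PP = \prod_i \PP^i$: each of $\hat V_T$, $\hat C^1_T$, $\hat C^2_T$ is a deterministic function of the $|\cA|$ independent exogenous blocks $s^i_{1:T} := (s^i_1,\dots,s^i_T)$, one per product. I would therefore prove each inequality by McDiarmid's bounded-differences inequality applied to the relevant function of these $|\cA|$ independent inputs, and then take a union bound over the finite set $\Theta\times\Omega\times\mathcal{G}$ (the factor $2$ inside the logarithm coming from the two-sided tail). Since $V_T = \EE^{\PP}\hat V_T$, $C^1_T = \EE^{\PP}\hat C^1_T$, and $C^2_T = \EE^{\PP}\hat C^2_T$, the deviation of each empirical quantity from its stated population counterpart is exactly what McDiarmid controls.

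The crux is to produce the per-coordinate constants $c_j$ bounding the change in each function when a single block $s^j_{1:T}$ is replaced by an arbitrary $\bar s^j_{1:T}$. Replacing block $j$ has two effects: a direct effect on product $j$'s own reward and inventory, and an indirect effect on every product through the coordinator, which changes all shadow prices and hence all actions. For the indirect effect, \Cref{asm:mcd_asm} bounds the joint action perturbation by $\|\ba_t(H_t,G)-\ba_t(\bar H^j_t,G)\|_1 = \sum_i|\Delta a^i_t|\le c_a$ at each $t$, so $\sum_t\sum_i|\Delta a^i_t|\le c_a T$. I would then propagate this through the dynamics using the key structural fact that the recursion $I^i_t = (I^i_{t_-}-D^i_t)_+$ with $I^i_{t_-}=I^i_{t-1}+J^i_t$ is $1$-Lipschitz: a perturbation in an arrival $J^i_\tau$ changes each later $I^i_s$ by no more than that perturbation, so $\sum_s|\Delta I^i_s|\le T\sum_\tau|\Delta J^i_\tau|\le T\sum_k|\Delta a^i_k|$, where each action maps to a single arrival via the lead-time indicator.

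With these two Lipschitz ingredients the three constants fall out. Since $R^i_t = p^i_t\min(D^i_t,I^i_{t_-})-c^i_t a^i_t$ is $c_{\max}$-Lipschitz in the action (cost term) and $p_{\max}$-Lipschitz in the inventory (via the $1$-Lipschitz sales term), summing gives $|\Delta\sum_t\gamma^t R^i_t|\le (p_{\max}+c_{\max})T\sum_t|\Delta a^i_t|$, and summing over products yields $|\Delta\sum_i\sum_t\gamma^t R^i_t|\le (p_{\max}+c_{\max})T\cdot c_a T = c_a(p_{\max}+c_{\max})T^2$; dividing by $|\cA|$ gives $c_j = c_a(p_{\max}+c_{\max})T^2/|\cA|$. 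For the storage violation, $(\cdot)_+$ is $1$-Lipschitz and $\tilde I_t=\sum_i w^i I^i_t$, so $|\Delta\hat C^1_T|\le w_{\max}\sum_t\sum_i|\Delta I^i_t|\le c_a w_{\max}T^2$; for the inbound violation arrivals are driven directly by actions (no extra factor $T$), giving $|\Delta\hat C^2_T|\le c_a u_{\max}T\le c_a u_{\max}T^2$. The direct effect on product $j$ perturbs only a single product's bounded reward/inventory and is absorbed at the same order. Plugging $c_j$ into $\sqrt{\tfrac12\sum_j c_j^2\log(2/\delta')}$ with $\sum_j c_j^2=|\cA|c_j^2$ reproduces the $1/\sqrt{|\cA|}$ scaling for the reward and the $\sqrt{|\cA|}$ scaling for the unnormalized violations, and setting $\delta'=\delta/(|\Theta||\Omega||\mathcal{G}|)$ produces the logarithmic term.

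The step I expect to be the main obstacle is the Lipschitz propagation: converting the per-time-step, aggregate-action bound of \Cref{asm:mcd_asm} into a bound on each product's entire trajectory of rewards and inventories, while correctly accounting for the two separate factors of $T$ — one from summing the action perturbation over the horizon, one from a single arrival perturbation persisting across all later periods through the $1$-Lipschitz but non-smooth, stockout-inducing inventory map — and for product $j$'s own direct perturbation. Handling the stockout nondifferentiability cleanly, via the contraction property of $(\cdot)_+$ rather than derivatives, is what makes the $T^2$ factor rigorous.
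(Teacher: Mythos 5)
Your proposal follows essentially the same route as the paper's proof: McDiarmid's bounded-differences inequality over the $|\cA|$ independent product blocks, with the per-coordinate constant obtained by converting \Cref{asm:mcd_asm} into a $(p_{\max}+c_{\max})\sum_t|\Delta a^j_t|$ bound on reward deviation (using that the positive-part/stockout dynamics only contract inventory deviations), yielding the $T^2$ factor, followed by a union bound over $\Theta\times\Omega\times\mathcal{G}$. Your write-up is in fact more explicit than the paper's about where each factor of $T$ arises and about the direct (non-action-mediated) effect on the modified product, but the argument is the same.
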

\begin{proof}
	We will prove this using McDiarmid's inequality. In order to do so, we need to translate
	\Cref{asm:mcd_asm} into an upper bound on the change in reward induced by a change of action
	of size $c_a$. We would like to compare trajectories resulting from $H_t$ and $\bar{H}^i_t$. We note
	that a stock-out only reduces the deviation between two inventory sequences. As such, we can upper bound
	the deviation in reward purely in terms of the total inventory purchased:

	\[|R^j_t(\theta, \omega, H_t, G) - R^j_t(\theta, \omega, \bar{H}_t, G)| \le (p^j_{\max} + c^j_{\max}) \cdot \sum_{t=1}^T |a^j_t - \bar{a}_t^j|.\]
	Summing over all the products and times, and applying \Cref{asm:mcd_asm}, we get

	\[
		\Biggl|\sum_{t=1}^T \frac{1}{|\cA|} \sum_{j \in \mathcal{A}} R^j_t(H_t) - \sum_{t=1}^T \frac{1}{|\cA|}\sum_{j \in \mathcal{A}} R^j_t(\bar{H}^i_t)\Biggr| \le \frac{T^2 (p_{\max} + c_{\max})c_a}{|\cA|}.
	\]

	Now applying McDiarmid's inequality, we have
	\[
		\PP[|\hat{V}_T(\theta, \omega) - V_T(\theta, \omega)| > \epsilon] \leq 2 \exp \left(- \frac{2 \epsilon^2 |\mathcal{A}|}{ (T^2 (p_{\max} + c_{\max})c_a)^2} \right)
		 \]
	Setting this equal to $\frac{\delta}{||\Theta||\Omega||\mathcal{G}|}$ and combining with a union bound provides
	the first result. The average constraint violation bounds follow similarly by bounding the change in
	constraint violation in terms of the change in action.
\end{proof}

\section{Experimental Results}
\label{sec:experiment}
In this section we backtest our proposed methodology on a simplified form of the problem we impose no inbound constraint, and $w^i = 1$ for all $i \in \cA$. We use storage, rather than inbound constraints in these experiments because storage constraints are more challenging to plan against -- for example, it is easy to satisfy an inbound constraint by ordering zero, but storage must be carefully managed through time as inventory can only decrease through demand realization. 

\subsubsection*{Data}
For training both the buying and coordinating policies, we use a dataset of 250K products from the US marketplace from  June 2017 to February 2020. Our out-of-sample-backtest period is from May 2022 to May 2023, again on a population of 250K products.

\subsubsection*{Path Space}
The constraint paths are from a space of functions of bounded variation. These are represented using the Haar wavelet basis, and we sample the coefficients in that basis from a multivariate Gaussian. \Cref{fig:example-paths} shows examples of the paths (they are scaled up proportional to demand).

\subsubsection*{Policies}
\amazon{
	\input{internal_details/policies.tex}
}{
	We compare a reinforcement learning agent trained using direct backprop with a production base stock policy. Our proposed neural coordinator is used to constrain the RL agent, and a model predictive controller is used to constrain the base stock policy. 
}

\subsubsection*{Capacity Violation Metrics}
In addition to measuring reward achieved by the various policies, we consider several additional measures of constraint violation. They are (M1) mean constraint violation, (M2) mean violation on weeks where either unconstrained policy met at least 90\% of the limit, (M3) percent of weeks where the violation exceeded 10\%, and (M4) percent of weeks where constraint violation exceeded 10\% and either unconstrained policy exceeded 90\% of the capacity limit.

\subsubsection*{Results}
\Cref{tab:evaluation-oot} shows the results on the out of sample backtest, where each combination of policy and coordinator were evaluated against 100 storage constraint paths from $\PP^G$. Note that under all metrics (both violation and reward) the RL policy with neural coordinator outperforms the base stock baseline with model predictive control. Although some of the violation metrics seem somewhat high, one should keep in mind that many of the capacity curves sampled will be {\it highly constraining}, much more so than in a real-world setting -- in practice if the supply chain were so constrained, one would build more capacity. 

\begin{table*}[htp]
	\caption{Out-Of-Time evaluation for multiple initializations; all rewards are rescaled versus unconstrained base stock. For violations, lower is better; for reward, higher is better. }
	\label{tab:evaluation-oot}
	\begin{center}
		\small
		\begin{sc}
			\begin{tabular}{lllccccc}
				\toprule
				& 	  &         & \multicolumn{4}{c}{Violations} & \\
				Initialization &  Policy & Coordinator  & M1      & M2     & M3       & M4     & Reward  \\
				\midrule
                \multirow{4}*{\parbox{2.2cm}{Onhand \\w/ Inflight}} 
             &  RL   & --     & 31.4\%     & --   &  42.2\%    & --  &  {\bf 102.1}    \\
             &  RL & Neural   & {\bf 2.4\%}     &  {\bf 4.6\%}    &  {\bf 10.7\%}    & {\bf 20.1\%}    &  100.7    \\
				     &  base stock  & --     & 26.5\%     &  --    &  38.7\%    &  --   &  100.0         \\
             &  base stock  & MPC    & 5.3\%     &  10.1\%    &  17.6\%    &  33.1\%   &  99.1        \\
				\midrule
        \multirow{4}*{\parbox{2.2cm}{Zero}}
              &  RL   & --     & 24.8\%    & --      &  34.6\%    & --      &  {\bf 104.8}    \\
              &  RL & Neural   & {\bf 1.9\%}     & {\bf 4.3\%}   &  {\bf 8.5\%}     & {\bf 18.3\%}  &  103.1   \\
              &  base stock  & --     & 24.1\%    & --      &  35.1\%    & --      &  100.0         \\
              &  base stock  & MPC    & 4.8\%     & 10.1\%  &  16.2\%    & 34.2\%  &  99.3        \\
				\bottomrule
			\end{tabular}
		\end{sc}
	\end{center}
  \vskip -0.5em
\end{table*}

\Cref{fig:constraint-inv-paths-oot} below shows two examples of trajectories in the evaluation period for both policies and coordinators. We can see that the neural coordinator is succesfully able to constrain {\it out-of-sample}. See \Cref{sec:experiments-more} for additional results including a comparison of cost trajectories under the neural coordinator vs MPC -- in general the costs produced by the neural coordinator appear martingale, while those produced by MPC do not\amazon{ (similar to historic capacity costs -- see \Cref{sec:costs-acc-evo}).}{.}

\begin{figure}[h]
	\includegraphics[width=0.9\textwidth]{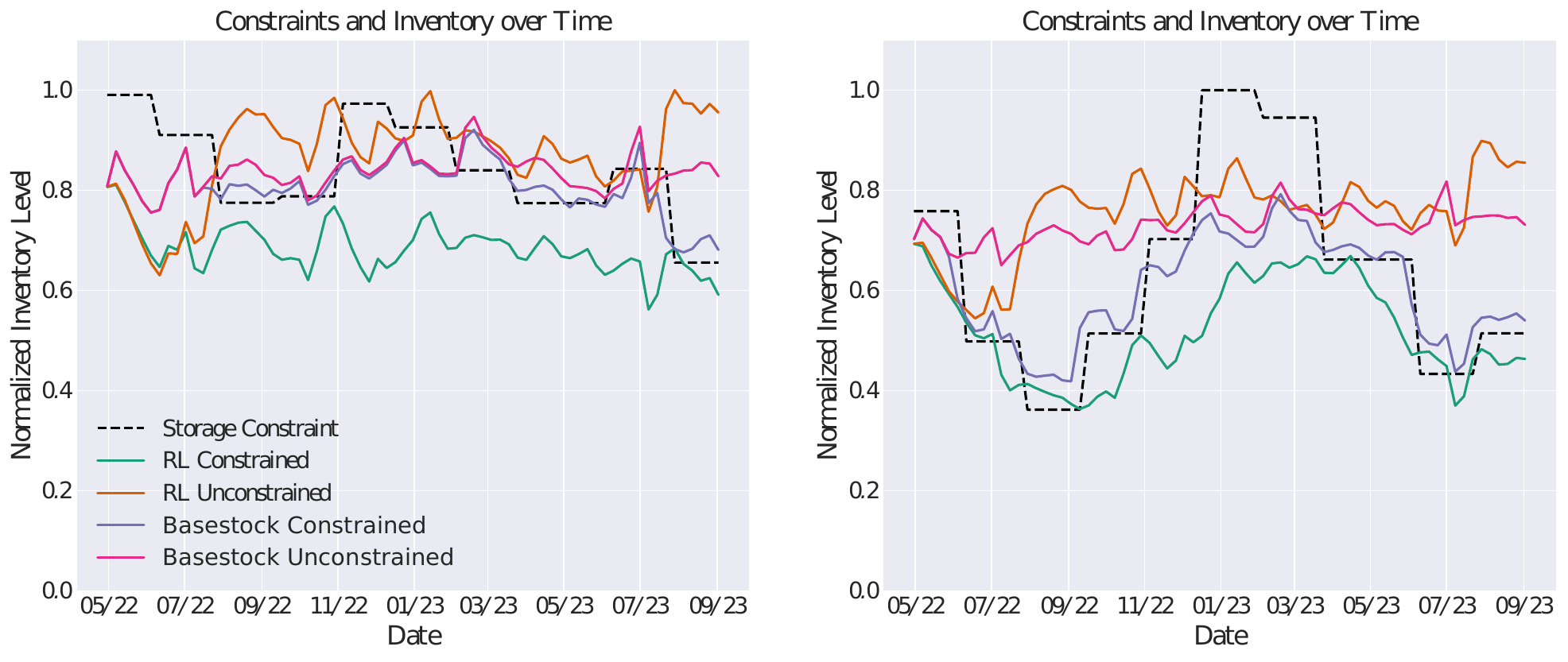}
	\caption{Inventory trajectories under different constraint paths during the evaluation period.}
  \vskip -1em
	\label{fig:constraint-inv-paths-oot}
\end{figure}

\section{Conclusion and Discussion}
We introduced a new approach to backtesting capacity control mechanisms and showed that our proposed neural coordinator can outperform a model predictive control baseline. Interesting directions of future work include alternative algorithms for learning the coordinator, how to handle evolution of capacity constraints \amazon{(see \Cref{sec:constraint-evo})}{(e.g. labor plans can change over time)}, and an evaluation against actual historic capacity curves.

\clearpage
\bibliographystyle{styles/ims_nourl_eprint}
\bibliography{external,internal}

\clearpage
\appendix

\section{Additional Experimental Results}
\label{sec:experiments-more}
\Cref{fig:mpc-vs-nc-full1} and \Cref{fig:mpc-vs-nc-full2} shows both the final costs for the base stock policy with MPC coordinator and the RL policy with neural coordinator for a randomly selected constraint trajectory. As can be seen, the costs produced by the neural coordinator appear to be martingale, while the MPC produced costs do not. MPC cost trajectories only start 5 weeks out as that was the longest planning horizon of any product in the evaluation dataset. We see similar trends in the historic capacity costs -- see \Cref{sec:costs-acc-evo}.

\begin{figure}[h]
	\centering
	\amazon{\includegraphics[width=\textwidth]{plots/css_mpc_vs_nc_full1.pdf}}{\includegraphics[width=\textwidth]{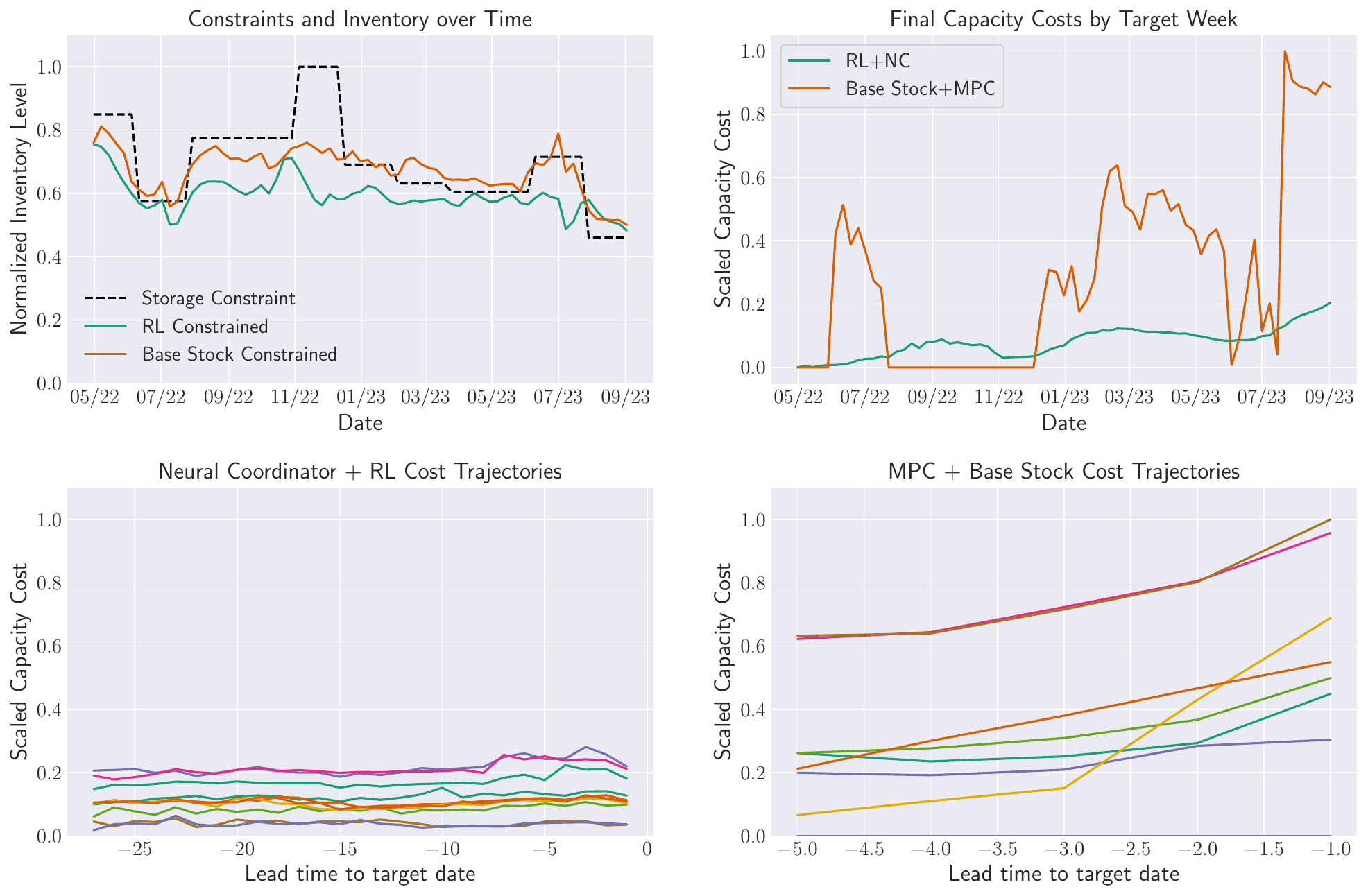}}
	\caption{Comparison of cost and capacity trajectories under base stock with MPC and RL with the neural coordinator.}
	\label{fig:mpc-vs-nc-full1}
\end{figure}

\begin{figure}[h]
	\centering
	\amazon{\includegraphics[width=\textwidth]{plots/css_mpc_vs_nc_full2.pdf}}{\includegraphics[width=\textwidth]{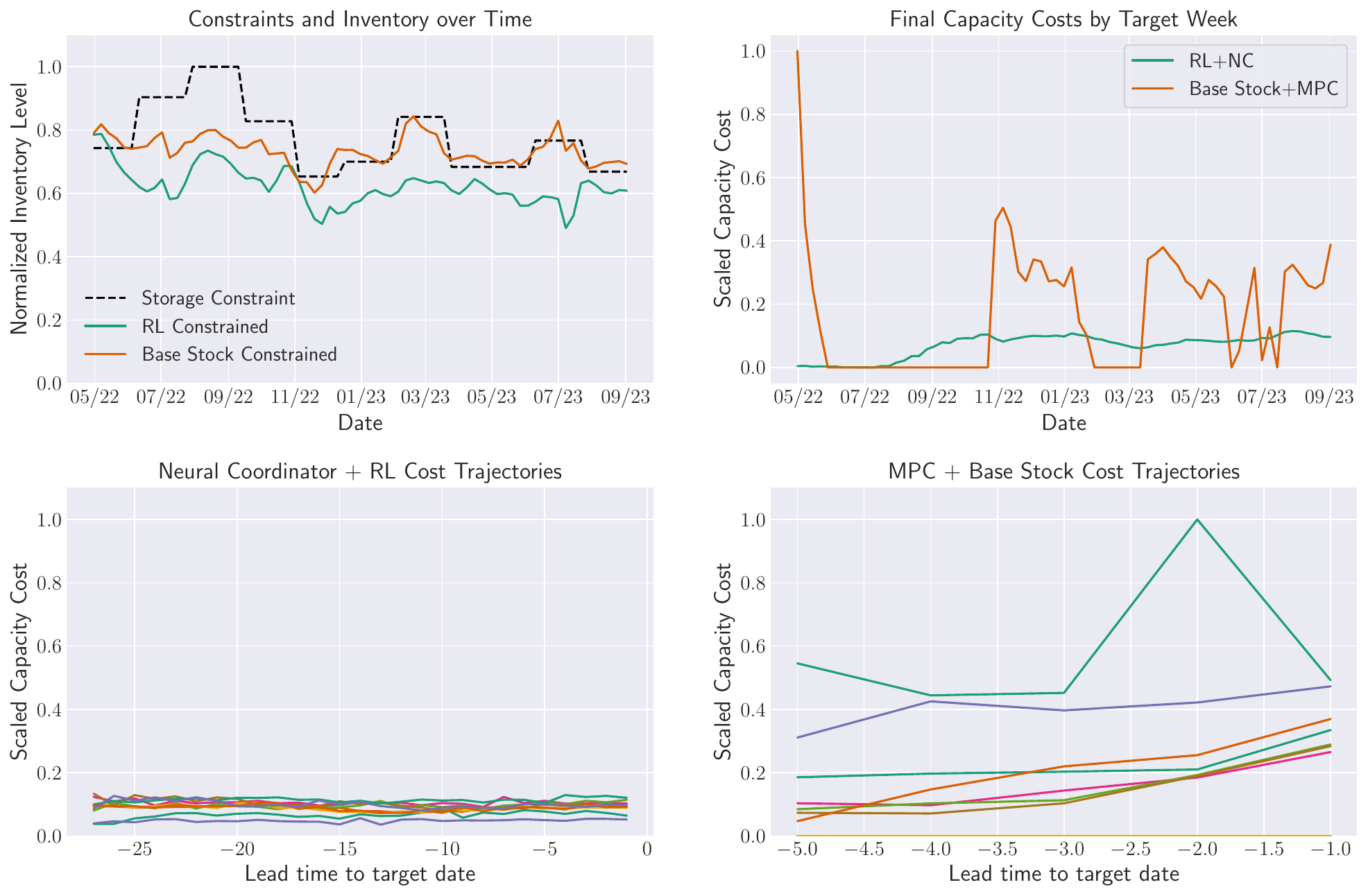}}
	\caption{Comparison of cost and capacity trajectories under base stock with MPC and RL with the neural coordinator.}
	\label{fig:mpc-vs-nc-full2}
\end{figure}

\clearpage
\section{Data and Features}
\label{sec:features}

\subsubsection*{Data}
In this section we first describe some salient properties of our data (all values are normalized and axes removed) that impact the required featurization for the neural coordinator. Specifically, there are two properties which are relevant for the neural coordinator:
\begin{itemize}
	\item our data is heavy tailed in nature  \citep{tripuraneni2021meta}
	\item and the exogenous variables (e.g. prices and costs) are not independent.
\end{itemize}

Denote the mean demand for product $i$ as $\bar{d}^i := \frac{1}{T}\sum_{t=1}^T d_t^i$, and similarly denote the mean costs and prices as $\bar{c}^i$ and $\bar{p}^i$, respectively. \Cref{fig:densities} shows a joint kernel density estimate of reward per sale and demand, along with the joint density estimate of the price and cost components in the reward function. Because the neural coordinator is solving a problem that depends on the joint distribution at the ASIN-level and \Cref{fig:densities} shows there is correlation across the relevant exogenous variables, the neural coordinator needs a carefully selected set of features so that it can accurately forecast the costs needed to constrain.  Finally, the right-most plot in \Cref{fig:densities} shows the heavy-tailed nature of the reward distribution.

\begin{figure}[h]
	\centering
	\includegraphics[width=0.3\textwidth]{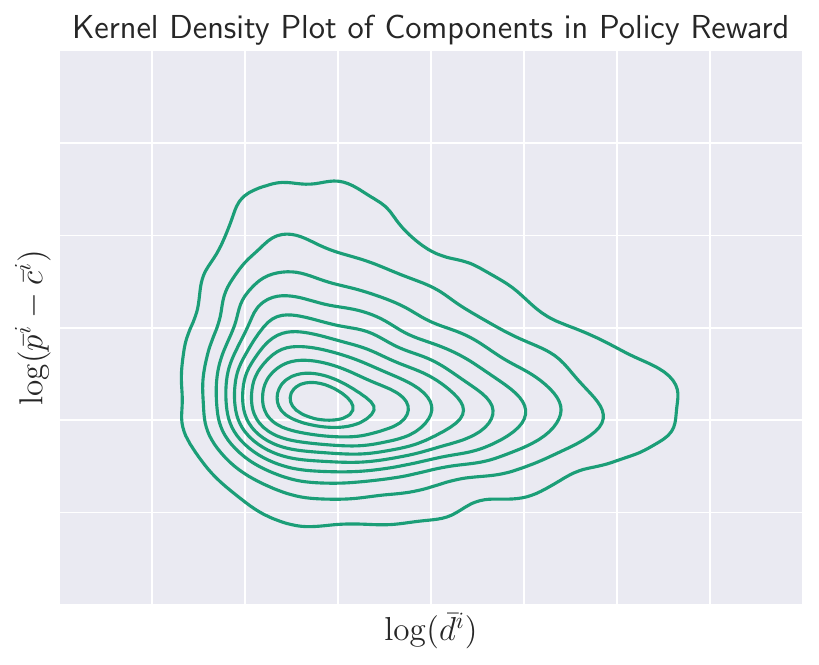} ~~ \includegraphics[width=0.3\textwidth]{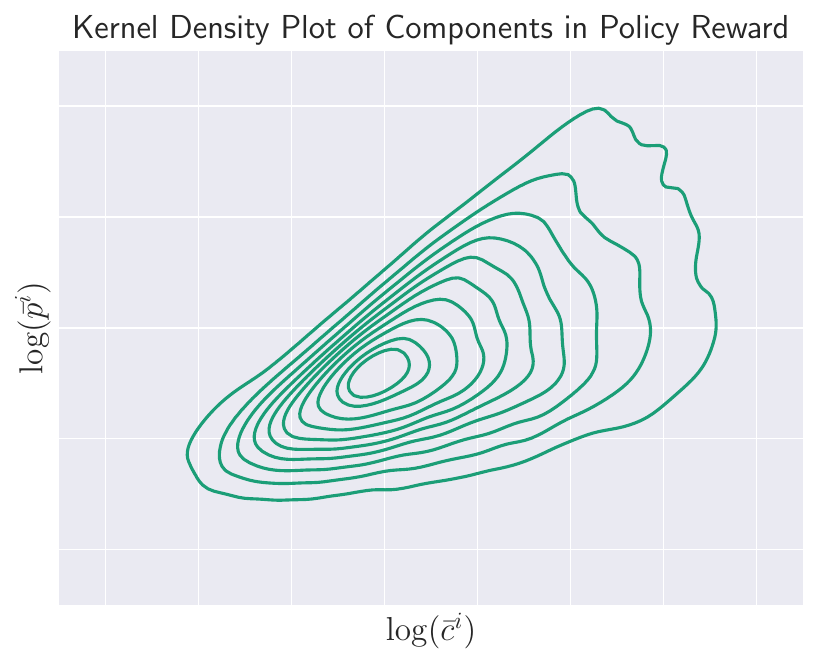} ~~ \includegraphics[width=0.3\textwidth]{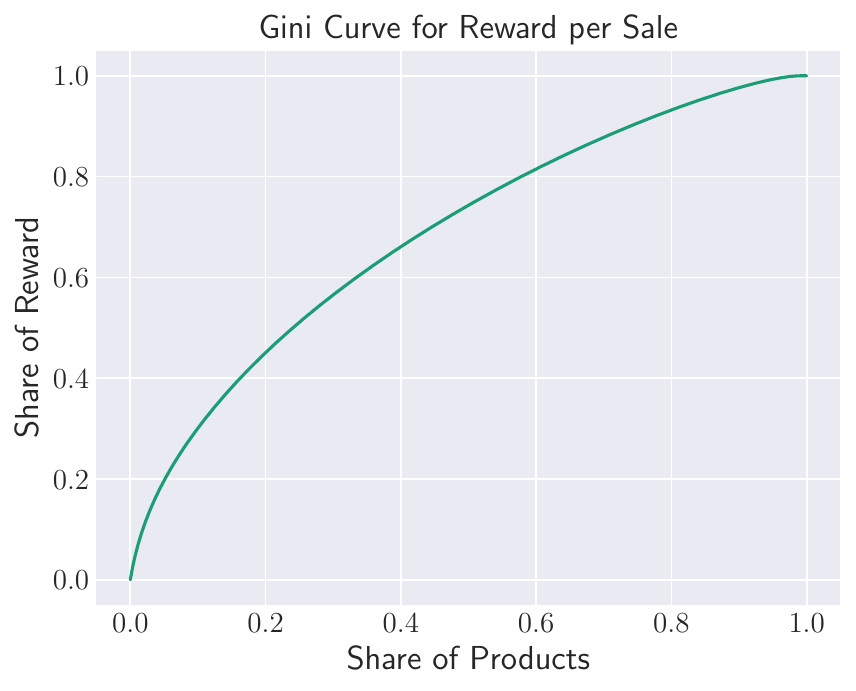} 
	\caption{ Density plots showing components of the reward function and a gini plot showing the heavy-tailed nature of the reward-per-sale distribution. } 
	\label{fig:densities}
\end{figure}


\subsubsection*{Featurization for Buying Policy}
In terms of features, we mainly follow \citet{madeka2022deep} in terms of features provided to the buying policy:
\begin{enumerate}
	\item The current inventory level $I^i_{(t-1)}$
	\item Previous actions $a^i_u$ that have been taken $\forall u < t$
	\item Time series features
		  \begin{enumerate}
			  \item Historical availability corrected demand
			  \item Distance to public holidays
			  \item Historical website glance views data
		  \end{enumerate}
	\item Static product features
		  \begin{enumerate}
			  \item Product group
			  \item Text-based features from the product description
			  \item Brand
			  \item Volume
		  \end{enumerate}
	\item Economics of the product - (price, cost etc.)
	\item Capacity costs -- past costs and forecasted future costs
\end{enumerate}

\subsubsection*{Featurization for Neural Coordinator}

The neural coordinator takes the following aggregate / population level features:
\begin{enumerate}
	\item Aggregate features for all current and previous times:
	\begin{enumerate}
		\item Order quantities: $\sum_{i\in\cA} a^i_{t}$
		\item Inventory: $\sum_{i\in\cA} D^i_{(t-1)}$
		\item Availability corrected demand $\tilde{D}_{t} := \sum_{i\in\cA} \hat{D}^i_{t}$
		\item Inbound: $\sum_{i\in\cA} J^i_{t}$
		\item All the above, but weighted by inbound ($u^i$) and storage volumes ($v^i$)
	\end{enumerate}	
	\item Forecasted quantities for next $L$ weeks.
	\begin{enumerate}
		\item Mean demand at lead time $l$: $\hat{D}_{t,l} := \sum_{i\in A} \hat{D}^i_{t,l}$
		\item Inventory after expected drain at lead time $l$: $\sum_{i\in A} \left[I^i_t - \sum_{s=t}^{t+l} \hat{D}^i_{t,l}\right]$
		\item All the above, but weighted by inbound ($u^i$) and storage volumes ($v^i$)
	\end{enumerate}
	\item Other time series features
		  \begin{enumerate}
			  \item Distance to public holidays
			  \item Mean economics of products - (price, cost etc.), weighted by demand and volumes
		  \end{enumerate}
	\item Capacity costs (past costs and forecasted future costs)
\end{enumerate}

\clearpage
\section{Algorithm Details}
\label{sec:alg-details}
\Cref{alg:dbp} describes at a high-level the procedure for learning a buying policy that solves \eqref{eqn:lagrangian-idp}.

\begin{algorithm}[H]
	\caption{Training Algorithm for Lagrangian IDP}
	\label{alg:dbp}
	\begin{algorithmic}
		\State \textbf{Input: } $\cD$ (a set of products), batch size $M$, initial policy parameters $\btheta_0$, $\cG$ (a set of capacity curves sampled from $\PP^G$), quadratic penalty $\alpha$
		\State $b \gets 1$
		\State \texttt{// Initialize dual costs to 0}
		\For{$j \in 1,\dots,|\cG|$}
			\State $\blambda^j_1,\dots, \blambda^j_T \gets 0,\dots,0$
		\EndFor
		\State \texttt{// Run primal-dual procedure until converged}
		\While{not converged}
			\State Sample mini-batch of products $\cD_M$
			\State Sample $j \sim U(|\cG|)$
			\State $G \gets G^j$
			\State $H_1^\lambda,\dots,H_T^\lambda \gets \texttt{CostHistoryFromTrajectory}(\blambda^j_1,\dots,\blambda^j_T)$.
			\State $\tilde{J}_1,\dots \tilde{J}_T \gets 0$ \texttt{// Total inbound}
			\State $\tilde{I}_1,\dots,\tilde{I}_T \gets 0$ \texttt{// Total onhand}
			\State $F_b \gets 0$ \texttt{// Total penalized reward}
			\For{$i \in \cD_M$}
				\State $F_{i}\gets 0$, $I_0^i \gets k^i$
				\For{$t = 0,\dots,T$}
					\State $a^i_{t} = \pi^i_{\theta_b, t}(H_{t})$
					\State $J^i_{t+1} \gets \cT^J_t(H_t, \theta_b)$
					\State $I^i_{t+1} \gets \cT^I_t(H_t, \theta_b)$
					\State $\tilde{J}_{t+1} \gets \tilde{J}_{t+1} + u^iJ^i_{t+1} $
					\State $\tilde{I}_{t+1} \gets \tilde{I}_{t+1} + w^iI^i_{t+1} $
					\State $F_{i}\gets F_{i} + \gamma^t R^{\lambda}(H_t, \theta_b)$
				\EndFor
				\State $F_b \gets F_b + F_{i}$
			\EndFor
			\State $\btheta_{b} \gets \btheta_{b-1} + \alpha \nabla_{\btheta} F_b$
			\State $\blambda^j_1,\dots,\blambda^j_T \gets \texttt{DualUpdate}(G,\blambda^j_1,\dots,\blambda^j_T,\tilde{J}_{1},\dots,\tilde{J}_{T},\tilde{I}_{1},\dots,\tilde{I}_{T},\alpha)$

			\State $b \gets b+1$
		\EndWhile
	\end{algorithmic}
\end{algorithm}
Note that, for this algorithm - at time $t$ the policy only uses the values of the exogenous variables upto time $t$. We can also augment the Lagrangian objective by adding some constant $\alpha$ times the constraint violation
\begin{equation}
	\label{eqn:augmented-lagrangian}
	W_t(\theta, H_t, G) := \Biggl(  \Biggl(\tilde{I}_t - K_t^1\Biggr)_+ \Biggr)^2 + \Biggl(  \Biggl(\tilde{J}_t - K_t^2\Biggr)_+ \Biggr)^2 ,
\end{equation}
for a given constraint sequence $G$.

\section{Learning a Neural Coordinator via Direct-Backprop}
\label{sec:neural-coordinator}
In this section we describe the IDP solved by the coordinator. First, assume a fixed buying policy $\theta$.
It is important to emphasize that the choice of {\bf training objective} for the coordinator is not important as we have a valid backtest (as discussed in \Cref{sec:learn}) to evaluate performance. Below we describe the ways in which the coordinator's IDP deviates from the buying agent's IDP.

\subsubsection*{Global Constraints} As mentioned in \Cref{sec:idp}, we now assume that the global constraint process $G$ is sampled from a distribution $\PP^G$ and is known to the coordinator agent at time $t=0$.\footnote{Other information models, such as revealing only the next $L$ capacity constraints at each time $t$, would be compatible with our approach, but we consider $G$ as being fully known to the coordinator at time 0 for ease of exposition.}

\subsubsection*{Control Processes} The coordinator makes a decision at each time $t$ as to what prices to set for the current time period as well as a forecast of prices for the next $L$ periods. Specifically they are determined by a policy parameterized by $\omega \in \Omega$:
\[
	(\blambda_t, \hat{\blambda}^L_t) = \phi_{\omega,t}(H_t, G).
\]
We characterize the set of these policies as $\Phi = \{\phi_{\omega,t}| \omega \in \Omega, t \in [0,T]\}$. As in \Cref{sec:idpconstruction-lagrangian}, the product level decisions are determined by a policy parameterized by $\theta$:
\[
	a_t^{i} = \pi_{\theta,t}^{i}(H_t,H^\lambda_t).
\]

\subsubsection*{Optimization Problem} Denote the total forecast error for time $t$ as
\[
	L(\blambda_t,H^\lambda_{t}) := \sum_{s=1}^L ||\blambda_t - (\hat{\blambda}^L_{t-s})_{L-s} ||_2^2,
\]
which is the MSE of all past forecasts for the current cost. The coordinator solves the problem given by \eqref{eqn:obj-leon} for some constants $\alpha$ and $\kappa$:
\begin{align*}
	\max_{\omega}&  ~\EE^{\PP,\PP^G}\Biggl[ \sum_{t=0}^T[\alpha (\tilde{I}_t - K_t^1)_+^2 + \alpha (\tilde{J}_t - K_t^2)_+^2 + \kappa ||\blambda_t||_1 + L(\blambda_t, H^\lambda_{t})] \Biggr]    \numberthis \label{eqn:obj-leon}      \\
	\text{subject to: }~~     &    \\
	I^i_0                   & = k^i  \\
	(\blambda_t, \hat{\blambda}^L_t) &= \phi_{\omega,t}(H_t, G) \\
	a^i_t                   & = \pi_{\theta,t}^i(H_t, H^\lambda_t) \\
I^i_{t_-} &= I^i_{t-1} + \sum_{0 \leq k < t} a^i_k \bbone_{v^i_k = (t-k)} \\
I^i_{t} & = \min(I^i_{t_-}- D^i_t, 0) \\
\blambda_t &\geq 0.
\end{align*}
Intuitively, this objective finds the least-cost solution that properly adheres to the capacity constraints.

\subsubsection*{Training Algorithm} We apply direct backprop directly to the optimization problem described above, as everything is fully differentiable. At each pass over the data, we randomly sample a new path $G \sim \PP^G$.

\clearpage
\amazon{
	\input{internal_details/}
}{
\section{Production MPC Capacity Control Costs}
\label{sec:costs-acc-evo}
Costs produced by MPC system are notoriously volatile. \Cref{fig:sort-costs} shows the evolution of inbound costs as the target horizon approaches and \Cref{fig:sortld-costs} shows the evolution of storage costs produced by the MPC system. As can be seen in the images, these costs do not appear to be martingale (which is a property one usually expects costs to satisfy).

\begin{figure}[h]
	\centering
	\includegraphics[width=0.85\textwidth]{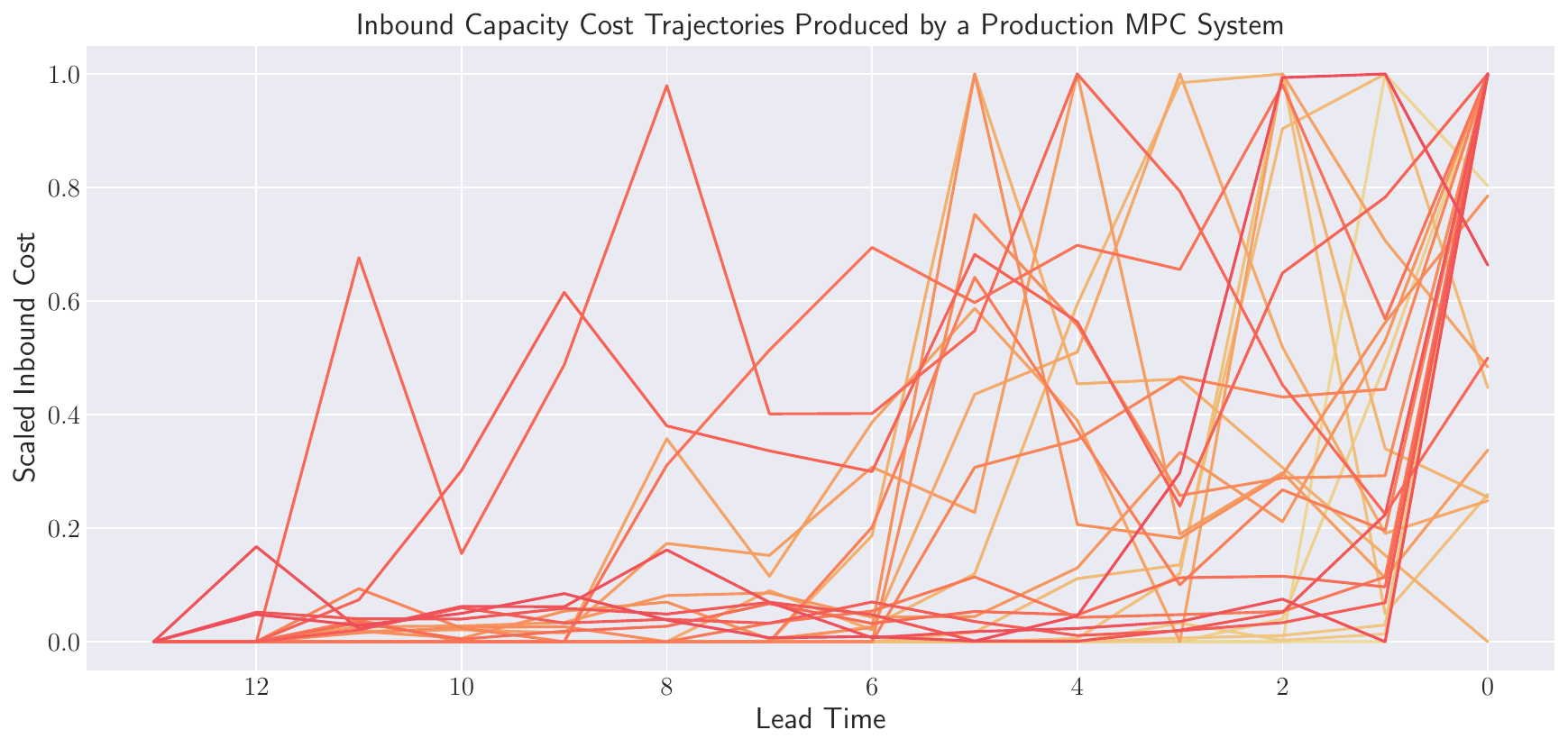}
	\caption{Evolution of inbound costs from a production system at a large e-retailer, scaled by maximum cost along each trajectory.}
	\label{fig:sort-costs}
\end{figure}

\begin{figure}[h]
	\centering
	\includegraphics[width=0.85\textwidth]{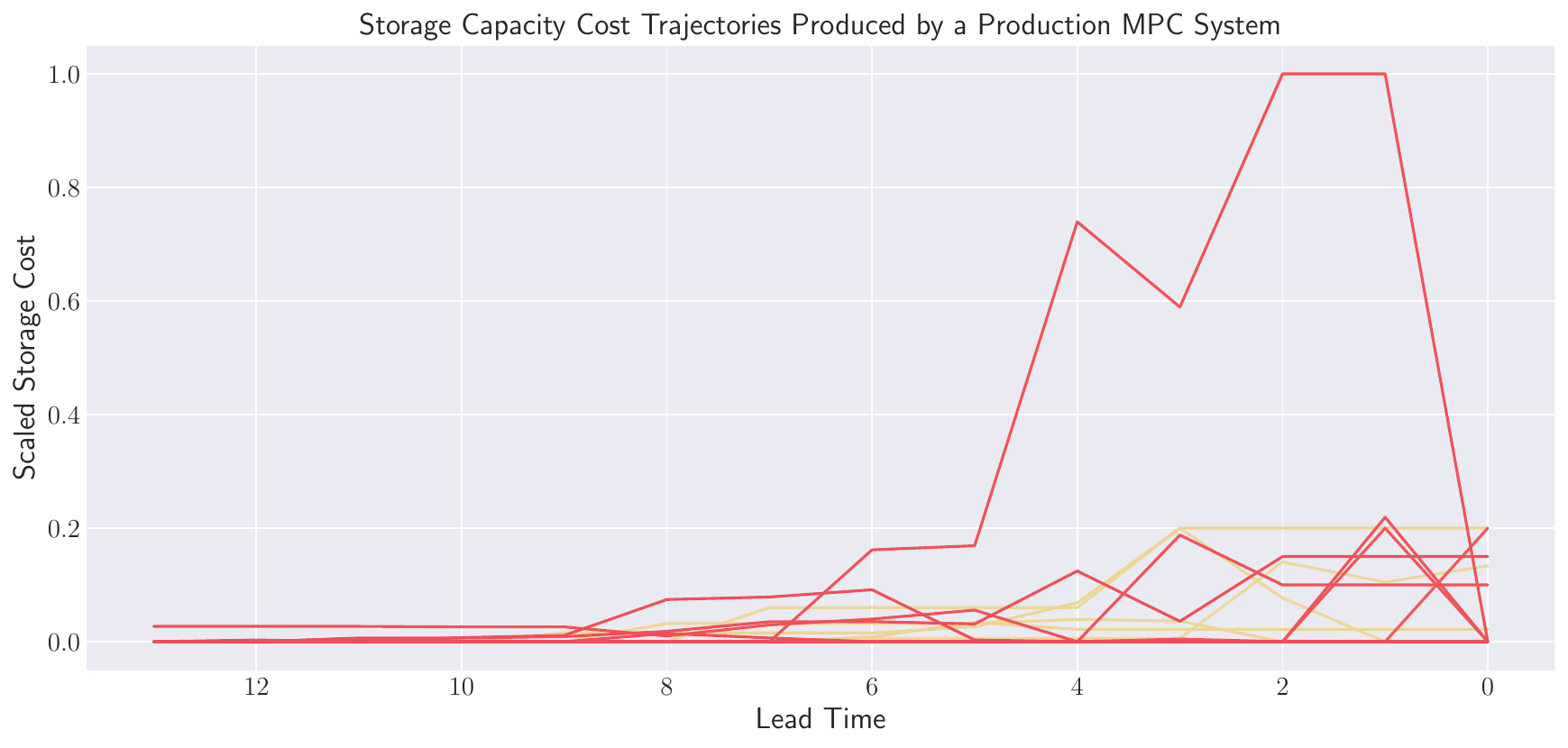}
	\caption{Evolution of storage costs from a production system at a large e-retailer.}
	\label{fig:sortld-costs}
\end{figure}
}

\end{document}